\renewcommand{\algorithmcfname}{ALGORITHM}
\newtheorem{theorem}{Theorem}
\newtheorem{defn}[theorem]{Definition}
\def\squareforqed{\hbox{\rule{2.5mm}{2.5mm}}}
\def\blackslug{\rule{2.5mm}{2.5mm}}
\def\qed{\hfill\blackslug}
\def\QED{\ifmmode\squareforqed % in mathmode : print just the square
  \else{\nobreak\hfil   % \hfil to end of current line
    \penalty50                 % penalty 50 for breaking the line here
    \hskip1em                  % leave at least 1em before the square
    \null                      % \hbox{}
    \nobreak                   % prohibit line break
    \hfil                      % another \hfil (if a break occurred)
    \squareforqed              % put the square here
    \parfillskip=0pt           % the line really ends here
    \finalhyphendemerits=0     % ignore a hyphen on the line above
    \endgraf}                  % end the paragraph
  \fi}
\def\blksquare{\rule{2mm}{2mm}}
\def\qedsymbol{\blksquare}
\newcommand{\bg}[1]{\medskip\noindent{\bf #1}}
\newcommand{\ed}{{\hfill\qedsymbol}\medskip}
\newenvironment{example}{\bg{Example. }}{\ed}
\newcommand{\comment}[1]{}
\newcommand{\R}{\ensuremath{\mathbb R}}
\newcommand{\qset}{\ensuremath{\mathcal{Q}}}	
\newcommand{\bqset}{\ensuremath{\mathcal{B}}}
\newcommand{\dqset}{\ensuremath{\mathcal{D}}}
\newcommand{\bq}{\ensuremath{b}}
\newcommand{\dq}{\ensuremath{q}}
\newcommand{\rel}{\ensuremath{\mathcal{R}}}
\newcommand{\bpr}[1]{\ensuremath{p_{#1}}}
\newcommand{\bprs}{\ensuremath{{\bf p_{\bqset}}}}
\newcommand{\pr}[1]{\ensuremath{p(#1)}}
\newcommand{\prp}[1]{\ensuremath{\hat{p}(#1)}}
\newcommand{\orac}[1]{\ensuremath{\mathcal{A}_{D}^{\bqset,\bprs}(#1)}}
\newcommand{\rev}{\ensuremath{\mathcal{R}}}
\begin{document}

\title{Pricing Queries (Approximately) Optimally}
\numberofauthors{2}
\author{
\alignauthor
Vasilis Syrgkanis\\
\affaddr{Microsoft Research, NYC}\\
\email{vasy@microsoft.com}
\alignauthor
Johannes Gehrke\\
\affaddr{Microsoft}\\
\email{johannes@microsoft.com}
}
% \author{Vasilis Syrgkanis\\
% \affaddr{Cornell University}\\
%        \email{vasilis@cs.cornell.edu}
% \and
% Johannes Gehrke\\
% \affaddr{Cornell University}\\
%        \email{johannes@cs.cornell.edu}
% }
\maketitle

\begin{abstract}
Data as a commodity has always been purchased and sold. Recently, web services that are data marketplaces have emerged that match data buyers with data sellers. So far there are no guidelines how to price queries against a database. We consider the recently proposed query-based pricing framework of Koutris et al.~\cite{Koutrisb} and ask the question of computing optimal input prices in this framework
by formulating a buyer utility model.

We establish the interesting and deep equivalence between arbitrage-freeness in the query-pricing framework and envy-freeness in pricing theory for appropriately chosen buyer valuations. Given the
approximation hardness results from envy-free pricing we then develop logarithmic approximation pricing algorithms exploiting the max flow interpretation of the arbitrage-free pricing for the restricted query
language proposed by \cite{Koutrisb}. We propose a novel polynomial-time logarithmic approximation pricing scheme and show that our new scheme performs better than the existing envy-free pricing
algorithms instance-by-instance. We also present a faster pricing algorithm that is always greater than the existing solutions, but worse than our previous scheme.  We experimentally show how our pricing algorithms perform with respect to the existing envy-free pricing algorithms and to the optimal exponentially computable solution, and our experiments show that our approximation algorithms consistently arrive at about 99\% of the optimal. 
\end{abstract}
\keywords{query-based pricing, optimal pricing, envy-freeness, arbitrage-freeness}

\section{Introduction}

Several online data marketplaces have emerged over the past years. We
as the database community have started to lay the foundations of such
data marketplaces and define and understand models of their
functionality \cite{Koutrisb, Li2012a, Tang2012}. The goal of this new
line of research has so far taken the form of how to design automated
algorithms for pricing query requests to a database. Most of the
current data marketplaces work with ad-hoc rules, mainly offering a
menu of queries and prices to the users or offering subscription based
services for whole datasets. The sudden increase of the amount of
online data and the different needs of the potential buyers will soon
render such an approach infeasible or inefficient in terms of both
revenue and social welfare of the system.

%This paper contributes to this new line of research. 
However, most of previous work has taken the axiomatic approach on
pricing queries to a relational database \cite{Koutrisb, Li2012a,
  Tang2012}. They posed a set of reasonable axioms like
arbitrage-freeness, no-disclosive pricing, maximality of prices and
then characterized the pricing functions that adhere to these axioms,
most of them showing uniqueness of prices.
%However, the axiomatic approach is rather ad-hoc and doesn't lay the foundations on why a pricing scheme should satisfy them.

However, none of the literature has given guidelines to the
foundational question: How should the market-maker actually price the
fundamental queries of her database? Most of the literature is focused
on how to derive prices given some fundamental input prices from the
seller, e.g., prices on basic queries or prices on cells of the
database, but has not addressed the question of how to compute these
initial prices in the first place!

In this work we resolve the above problem by using a game theoretic
approach to pricing relational queries building on recent ideas from
the optimal pricing literature
\cite{Guruswami,Balcan2008,Cai2011}. Using such techniques we actually
give an answer to the question above: What should be the optimal
revenue pricing scheme for the seller. As a first major result, we
show that several of the axioms assumed in the recent database
literature are implications of the optimal pricing approach thereby
justifying them and at the same time laying theoretical foundations
behind them.

%Unlike previous work we will also study selling exclusive access to data rather than simply pricing them. Therefore, we will also consider an auction formulation of the problem in addition to the pricing model. We will be interested in designing auctions for relational data that will maximize or approximately maximize the seller revenue.

Online data marketplaces, in addition to optimal, require simple
pricing schemes, with simple and intuitive rules. This has been the
motivating force behind the recent query-based pricing framework
\cite{Koutrisb}, that provides a simple and easy-to-compute pricing
approach. In contrast, most of the optimal mechanism design literature
concludes that the optimal mechanism that the seller should run
involves very complicated pricing rules. Our second major result in
this paper is to merge the two approaches of the database and the
optimal mechanism design community in the following way: \emph{We will
  restrict attention to the simple query-based pricing schemes of
  \cite{Koutrisb}, but we actually solve for the optimal pricing among
  these schemes.}

The main difference in our paper with previous work from the database
community is that we will also model the buyer preferences using
widely used utility theory models, i.e., we will assume that the users
(players) have \emph{valuations} over data in the relational database.

\paragraph*{Buyer Utility Model} 
We consider the following model: There are $n$ buyers and one seller
possessing information in the form of a relational database. Each
buyer $i$ is interested in the result of a query $\dq_i$ to the database
that is sold and has a value $v_i$ for the results of this query. The
query $\dq_i$ that each buyer is interested in, falls into some set of
queries $\dqset$ predefined by the seller. We will assume that
the pair $(v_i,\dq_i)$ of every buyer is known to the seller and hence,
the problem of maximizing revenue is an algorithmic question. In
practical applications one should think of the above set of buyers as
a representative demand that the seller has calculated will arrive in
his market. For instance, through market analysis he might have
concluded that out of a 100 buyers, $x$ of them will have value $v_1$
and will want query $\dq_1$, $y$ will have value $v_2$ and want query
$\dq_2$ and so on. We simply explicitly represent this representative
demand load as a set of buyers.

\paragraph*{Query Pricing Model} We restrict the pricing mechanism of the seller according to the query-based pricing framework of Koutris et al. \cite{Koutrisb}. Specifically, the seller sets explicitly the price to a bundle of queries $\bqset$, which are called the \textit{base queries}. From this 
set of queries, the price of every query in the database is computed according to the Fundamental Pricing Function introduced in \cite{Koutrisb}. 
%We will discuss more explicitly how the axioms used in \cite{Koutrisb} are simple implications
%of optimality of the pricing scheme, when restricted to query-based data pricing. 

We will assume that the set of base queries $\bqset$ and demanded queries $\dqset$ is such that the Fundamental Pricing 
Function is polynomially computable. For instance, if the set of bundles $\dqset$ that the buyers are interested in are Generalized Chain Queries, as described in \cite{Koutrisb} and the base-queries are all the selection queries, then \cite{Koutrisb} shows that the computation of the Fundamental Pricing Function can be reduced to a max-flow computation. 

When the seller chooses a set of base queries and their prices, then the price $p(\dq_i)$ of each bundle is uniquely determined. A buyer will buy the bundle if and only if $p(\dq_i)\leq v_i$. Hence, given a choice of input prices for the base queries the revenue of the seller is: 
\begin{equation}
\rev = \sum_{i\in [n]: v_i \geq p(\dq_i)} p(\dq_i)
\end{equation}
The above formulation allows us to ask the main question of this work:

\noindent\textbf{Main Question.} \emph{Given a set of bidder valuations and demanded queries what is the choice of input prices on base-queries that maximize the total revenue of the seller? }

\subsection*{Our Results}
We start by presenting a very strong connection between the query-based pricing model 
and the envy-free optimal pricing literature that has been recently developed by the algorithmic game theory community. Specifically, we show that the fundamental pricing formula of \cite{Koutrisb} is a consequence
of envy-freeness in envy-free pricing with appropriately defined buyer valuations, which we call \text{unit-bundle-demand valuations}. In addition, we show
that the polynomial computation of the pricing formula simply corresponds
to saying that the demand function (i.e. set that the player wants to buy given a set of prices) in the envy-free pricing model is polynomially computable, even when the valuation of the bidder is given in the concise
representation of a value $v_i$ and a query $Q_i$.

We use this strong connection to show that the optimal pricing question that we ask is NP-hard even in very natural instances that might arise in practice. We then proceed with presenting the logarithmic approximation, randomized single price-scheme of Balcan et al. \cite{Balcan2008} in our setting. We use the polynomial computability of the demand function to obtain a deterministic single pricing scheme that obtains a logarithmic approximation to the optimal pricing scheme.

Then we present novel multi-price pricing schemes that built upon the intuition obtained by the single-price 
schemes. We give two multi-price schemes that achieve revenue at least as much as the single-price
schemes pointwise, for each instance of the problem. The hardness of the pricing problem is 
mainly due to the fact that we don't know which subset of the players are going to be allocated 
by the optimal pricing. Our algorithms are based upon strategically picking a polynomial number of subsets of 
allocated players and then solving for the optimal or approximately optimal pricing, subject to each 
allocation set, outputting the set that yielded the highest revenue. When solving for the optimal pricing
conditional on the allocation set, we use the polynomial computability of the demand function
implied by the max-flow reduction of \cite{Koutrisb} as a separation oracle.

In the last part we provide some experimental analysis of how our new multi-price approximation schemes
perform with respect to the existing single-price schemes from the envy-free pricing literature and
show that they yield a substantial increase to the revenue produced, over random input instances. In addition, for instances where the exponential computation of the optimal
pricing is feasible we show that our approximation algorithms guarantee more than 99\% of the optimal.

\subsection*{Related Work} 

\textbf{Axiomatic Pricing of Relational Data.} Balazinska et al. \cite{Balazinska2011} motivated the need of new models
for capturing online market places. The stress the need for more fine-grained pricing and try to motivate
more automated pricing systems than the existing "price menu"-based ones. They mention several modeling
parameters that a pricing market should define: 1) the \textit{structural granularity} at which the 
seller should attach prices (this could either be tuples or cells of tables, queries, relations etc.)
2) a \textit{base pricing function} attaching seller defined prices at the chosen granularity 3) a \textit{method for determining the price} of every other allowable interaction with the database (e.g. query price) from the base pricing function, 4) a \textit{subscription model} specifying how updates to the data are priced. They also 
axiomatically define some desiderata for pricing functions, like arbitrage-freeness, fairness and efficiency. 

Koutris et al. \cite{Koutris, Koutrisb} considered a query based model of pricing. In their model, the structural
granularity of pricing is a query. The seller sets the price of a specific set of base queries and then the
automated algorithm has to derive the price of any other query that could be made to the database. 
They show that given a base pricing function there exists a unique maximal arbitrage-free pricing 
function and give a fundamental formula that defines it. They show that computing the price of a query 
based on this function is NP-hard for the case when base queries are general conjunctive queries. 
Their main result is a polynomial time algorithm, via a reduction to max-flow, for computing the price of any Generalized Chain Query (a special type of join conjunctive query) when the base queries are only selection
queries. The arbitrage property studied by \cite{Koutrisb} is based on an instance based determinacy 
relationship: if for the current instance of the database, one could determine the result of 
a query $Q$ be the result of a bundle of queries $Q_1,\ldots,Q_k$ then it must be that $p(Q)\leq \sum_{i=1}^{k}p(Q_i)$. In addition, in a follow-up paper \cite{Koutris2012a} they provide a description
of an implementation of their framework as a working marketplace. 

Li et al. \cite{Li2012a} studied the pricing of aggregate queries and more specifically the pricing
of the special type of linear queries (i.e. aggregate queries that can be expressed as a linear combination
of the column entries of a relationship). They also follow a similar approach to \cite{Koutrisb} in that
the seller prices a base set of linear queries and then the market algorithm computes the price of 
the remaining queries such that it satisfies arbitrage freeness. There are several crucial differences
with the approach of \cite{Koutrisb}. First, the arbitrage freeness is based not on an instance based
determinacy relationship but rather on a schema based one: if for any instance of the database, one
could determine a linear query $Q$ using the answer to a bundle of linear queries $Q_1,\ldots,Q_k$
then it must be that $p(Q)\leq \sum_{i=1}^{k}p(Q_i)$. This crucial difference makes the price function
independent of the instance of the database. This is a crucial property that Li et al. actually 
require as an axiom; if the price function were dependent on the instance then just asking for a price
quote for a specific query could potentially reveal information about the database entries. The latter
could potentially lead to manipulation by the buyers. This is the property of non-disclosive
pricing introduced in Liu et al.. It is interesting to observe that the pricing function of Koutris et al.
doesn't satisfy this property and hence is potentially susceptible to such manipulations. 

Tang et al. \cite{Tang2012} take a different approach than the previous two papers and consider tuples 
of relations as the structural granularity of the pricing function. Hence, the seller now places prices
on tuples of the database and then the price of a query is a function of the tuples that contribute to 
the response of the query. They also construct pricing functions that are arbitrage free. In addition 
they also apply their pricing model to probabilistic databases. 

Last, a very recent working paper \cite{Li2012b} tries to address the question of query pricing in combination with differential privacy. However, we don't plan to address differential privacy issues in this project
hence we just refer to it for completeness purposes. 

\textbf{Optimal Mechanism Design.} If we think of each query as an item then our setting could be cast as a combinatorial auction with single-minded bidders, where each query is an item and each bidder is interested in a specific bundle of items/queries and has a value $v_i$ for acquiring that bundle. In the Bayesian setting the type of each player is his bundle of interest and his value. Under such a formulation our question becomes the classical question of optimal pricing in a multidimensional setting. When there are constraints on the allocation of queries to buyers (i.e. if a serve this query to this buyer then I cannot serve this other query to some other buyer) then the problem becomes the classical problem of mutli-dimensional optimal mechanism 
design. Both problems have daunted the theoretical economics community for several decades since the seminal work
of Myerson \cite{Myerson1981} on the single-dimensional version of the problem, with only partial progress. Very recently 
the algorithmic game theory community has given algorithmic solutions to these problems. Specifically, Cai and Daskalakis \cite{Cai2011}
solves the multi-dimensional pricing problem, under conditions of the distribution of types. Cai, Daskalakis and Weinberg \cite{Cai2012,Cai2013}
solve the multi-dimensional mechanism design setting under very general feasibility constraint and in running time polynomial in the 
size of the type spaces. Recently, Daskalakis, Deckelbaum and Tzamos \cite{Daskalakis2012} show hardness results when the description
of the type distributions is given in a concise form and not explicitly. Another recent work by Bhalgat et al \cite{Bhalgat2012}, gives
a much simpler solution to the optimal mechanism design problem using a multiplicative weight updates approach. 

\textbf{Envy-Free Item Pricing.} Our formulation of the problem however is significantly different from the above. We don't consider the whole space of pricing schemes
but only a specific simple pricing scheme and optimize only over that. The reason being that most of the above techniques tend to give
very complicated pricing rules and also rules. 

Our formulation falls into the recent literature on envy-free pricing with unlimited supply, starting by Guruswami et al. \cite{Guruswami}. In envy-free pricing a seller faces a set of $n$ buyers. Each buyer has a valuation $v_i(S)$ for each set $S$ of items. The seller is restricted to assigning prices to the items. Given some item prices each buyer will pick the set that yields him the highest utility: $v_i(S)-\sum_{j\in S}p_j$. The literature then
asks the question of how should the seller price the items to maximize or approximately maximize his revenue.

Guruswami et al \cite{Guruswami} initiate the complexity study of this problem. They study the case where
bidders are either unit-demand or single-minded (i.e. want a specific set at some value $v_i$). They show that
in both cases the problem is NP-hard and in the second it is even hard to approximate, and give 
a $\log(n)+\log(m)$ approximation scheme for single-minded bidders that is based on posting
the same price on all items. Balcan et al. \cite{Balcan2008} extend
the latter approximation scheme to any combinatorial valuation using a randomized single price scheme. 
We derandomize the pricing scheme of Balcan et al. \cite{Balcan2008} for the specific valuation
classes that arise when buyers demand chain queries and the seller prices selection queries, to obtain
a deterministic single price scheme for such valuations that are a generalization of single-minded bidders.
On the hardness side, Demaine et al. \cite{Demaine} showed that under a mild complexity assumption 
no sublogarithmic ($O(\log^{\epsilon} n)$) polynomial approximation algorithm exists, hence rendering 
the latter approximation algorithms almost tight. Several other works have considered
special instances of the envy-free pricing framework both form approximation and complexity perspective
\cite{Chalersmook,Briest}.

\section{Query Pricing Model}
In this section we describe in more detail the practical query-based pricing model of Koutris et al. \cite{Koutrisb}. In the next section we will formulate our optimal pricing question under this model and discuss it's relation to the optimal item-pricing problem with single-minded bidders initially studied in \cite{Guruswami}.

Consider a seller that possesses an instance $D$ of a database under some relational schema $\rel=(\rel_1,\ldots,\rel_k)$ and let $\qset$ be the set of all queries to the database. For each relation $\rel$, denote with $\rel.X$ an attribute $X$ of that relation. The seller selects a set of base queries $\bqset$ and explicitly assigns a price $\bpr{\bq}$ for any base query $b\in \bqset$ and we denote with $\bprs$ the vector of input prices. Each potential buyer $i$ is interested in the results to a query $\dq_i$ on the database. The seller restricts this query to fall into some predefined set of demand queries $\dqset$. Given the set of explicitly priced queries, a price function $\pr{\dq}$ defines the price of any possible query to the database $\dq\in \qset$. 

The pricing function cannot be arbitrary but rather has to satisfy some natural axioms. Koutris et al. \cite{Koutrisb} define a minimal set of two axioms: arbitrage-freeness and discount-freeness, which we briefly describe below. 

The notion of \textit{arbitrage-freeness} states that if query $\dq$ can be ``determined'' by queries $\dq_1,\ldots,\dq_k$ then it must be that $\pr{\dq}\leq \sum_{i=1}^k \pr{\dq_i}$. Observe that 
the latter definition heavily depends on the notion of determinacy used (see \cite{Koutrisb} for
a detailed exposition of the properties and the different notions of determinacy). We will assume
here some abstract notion of determinacy and will denote with $D\vdash \cup_{i=1}^{k}\dq_i \twoheadrightarrow \dq$, if queries $\dq_1,\ldots,\dq_k$ determine $\dq$ in the database instance $D$.
A price function $\pr{\dq}$ is \textit{valid} if it is arbitrage-free and the price of explicitly priced queries is equal to the input price: $\forall \bq\in \bqset: \pr{\bq}=\bpr{\bq}$.

A pricing function is \textit{discount-free} if for any other valid pricing function $\prp{\dq}$
it holds: $\forall \dq\in \qset: \prp{\dq}\leq \pr{\dq}$, i.e. no other valid pricing function
assigns a higher price to any query in the database.

Koutris et al. \cite{Koutrisb} show that, given set of base queries and input prices, there is a unique pricing function that satisfies the axioms of arbitrage-freeness and discount freeness. Specifically, they characterize this pricing function as follows:

\begin{theorem}[Fundamental Pricing Function \cite{Koutrisb}]
 Let $\bqset$ be a set of base queries and $(\bpr{\bq})_{\bq\in \bqset}$ be the input prices.
 For any query $\dq\in \qset$, let 
\begin{equation}
supp_D^\bqset(\dq) = \{ \mathcal{C}\subseteq \bqset~|~ D\vdash \cup_{\bq_i\in \mathcal{C}} \bq_i \twoheadrightarrow \dq\} 
\end{equation}
be the set of support sets of query $\dq$, i.e. a set of base queries $\mathcal{C}\in supp_D^\bqset(\dq)$ if they determine $\dq$. There is a unique pricing function that satisfies the
axioms of arbitrage-freeness and discount-freeness and is defined as:
\begin{equation}\label{eqn:fund_price}
 \pr{\dq} = \min_{\mathcal{C}\in supp_D^\bqset(\dq)} \sum_{\bq \in \mathcal{C}} \bpr{\bq}
\end{equation}
\end{theorem}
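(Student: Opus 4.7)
The plan is to prove the theorem in three stages: show the formula defines a valid pricing function, show it dominates every other valid pricing function, and then deduce uniqueness.

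First I would verify validity, which requires both arbitrage-freeness and agreement with the input prices on base queries. For the agreement part, I note that the singleton $\{\bq\}$ trivially belongs to $supp_D^{\bqset}(\bq)$ for any $\bq \in \bqset$, so the min in \eqref{eqn:fund_price} is at most $\bpr{\bq}$; the other direction uses the fact that any cheaper support set would have already witnessed an arbitrage opportunity among the input prices, so we may assume the input prices are internally arbitrage-free on $\bqset$ (or we take this to be part of the setup in \cite{Koutrisb}). For arbitrage-freeness, suppose $D\vdash \cup_{i=1}^k \dq_i \twoheadrightarrow \dq$. For each $i$ pick a minimizer $\mathcal{C}_i \in supp_D^{\bqset}(\dq_i)$. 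By transitivity of the determinacy relation, $\mathcal{C} = \bigcup_i \mathcal{C}_i$ lies in $supp_D^{\bqset}(\dq)$, hence
\begin{equation*}
\pr{\dq} \leq \sum_{\bq\in \mathcal{C}}\bpr{\bq} \leq \sum_{i=1}^k \sum_{\bq\in \mathcal{C}_i}\bpr{\bq} = \sum_{i=1}^k \pr{\dq_i}.
\end{equation*}

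Next I would prove discount-freeness, i.e.\ that $p$ pointwise dominates any other valid pricing function $\prp{\cdot}$. Fix $\dq\in\qset$ and any $\mathcal{C}\in supp_D^{\bqset}(\dq)$. Because $\prp{\cdot}$ is arbitrage-free and the queries in $\mathcal{C}$ determine $\dq$, we have $\prp{\dq}\leq \sum_{\bq\in\mathcal{C}}\prp{\bq} = \sum_{\bq\in\mathcal{C}}\bpr{\bq}$, where the equality uses that $\prp{\cdot}$ is valid and hence matches the input prices on base queries. Taking the infimum over $\mathcal{C}\in supp_D^{\bqset}(\dq)$ yields $\prp{\dq}\leq \pr{\dq}$. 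Uniqueness is then immediate: if $p$ and $p'$ were both valid and discount-free, applying the discount-free property of each to the other gives $p\leq p'$ and $p'\leq p$, hence $p=p'$.

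The main obstacle I expect is the transitivity/composition step used in proving arbitrage-freeness: one needs that whenever $\mathcal{C}_i$ determines $\dq_i$ in $D$ and the $\dq_i$'s jointly determine $\dq$, the union $\bigcup_i \mathcal{C}_i$ determines $\dq$. This is not automatic from the abstract notation $\vdash \twoheadrightarrow$; it must be justified from the specific notion of instance-based determinacy of \cite{Koutrisb}, where it follows because if the outputs of $\mathcal{C}_i$ functionally fix the output of $\dq_i$ and the outputs of the $\dq_i$'s functionally fix the output of $\dq$, then the outputs of $\bigcup_i \mathcal{C}_i$ functionally fix the output of $\dq$ by composition of these functional dependencies. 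Once this transitivity is in hand, the rest of the argument is the short duality-style computation sketched above.
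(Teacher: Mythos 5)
The paper itself does not prove this statement: it is imported verbatim from Koutris et al.\ \cite{Koutrisb}, so there is no in-paper proof to compare against. That said, your argument is the standard (and correct) maximality-plus-uniqueness proof, and it matches the structure one would expect from the original source: show the formula is valid, show any valid pricing function is pointwise dominated by it, and conclude uniqueness by mutual domination. You are also right to flag the two places where the argument leans on unstated hypotheses. First, agreement on base queries ($\pr{\bq}=\bpr{\bq}$) genuinely requires the input prices to be internally consistent --- if some $\mathcal{C}\subseteq\bqset$ determines $\bq\in\bqset$ at total price below $\bpr{\bq}$, then no valid pricing function exists at all, so the theorem tacitly presupposes this. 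Second, arbitrage-freeness of the formula needs closure properties of the determinacy relation: reflexivity (so $\{\bq\}\in supp_D^{\bqset}(\bq)$), monotonicity under supersets (so the union $\bigcup_i\mathcal{C}_i$ still determines each $\dq_i$), and composition/transitivity; these hold for the instance-based determinacy of \cite{Koutrisb} but are not consequences of the abstract $\vdash\twoheadrightarrow$ notation. Two small points you use silently: the inequality $\sum_{\bq\in\bigcup_i\mathcal{C}_i}\bpr{\bq}\leq\sum_i\sum_{\bq\in\mathcal{C}_i}\bpr{\bq}$ requires nonnegativity of the input prices (the union may collapse repeated elements), and the case $supp_D^{\bqset}(\dq)=\emptyset$ should be handled by setting $\pr{\dq}=+\infty$, under which convention the arbitrage inequality holds vacuously when some $\pr{\dq_i}$ is infinite. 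With those caveats made explicit, the proof is complete.
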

The description of the function is pretty natural: consider all possible sets of queries among the explicitly priced queries, that determine query $\dq$. Then the price of query $\dq$ is the price of the
cheapest such set.

\paragraph*{Polynomial Computability of Price Function} Koutris et al \cite{Koutrisb} show that computing the price function is an NP-hard problem in general and characterize under which assumption
on the determinacy relation, on the set of base queries $\bqset$ and on the set of demand queries $\dqset$ the function is polynomially computable. 

\textbf{Polynomial Oracle Assumption.} In this work we will assume that the sets $\bqset$ and $\dqset$, as well as the determinacy relation are such that there exists a polynomial time algorithm $\orac{\dq}$ that computes the minimizer in the Fundamental Pricing formula \eqref{eqn:fund_price}: 
\begin{equation}
\orac{\dq} = \arg\min_{\mathcal{C}\in supp_D^\bqset(\dq)} \sum_{\bq \in \mathcal{C}} \bpr{\bq} 
\end{equation}
Our approximation algorithms will use oracle access to this algorithm. Specifically, our main algorithm
will use it as a separation oracle to solve in polynomial time a linear program with exponentially many
constraints.

The oracle assumption that we make is not a vacuous one and in fact Koutris et al. \cite{Koutrisb} show
that for a very natural class of base and demand queries the above minimizer can be computed via
a reduction to a min-cut computation in a query specific graph. More specifically, reinterpreting 
the results of \cite{Koutrisb}, suppose that the notion of instance-based determinacy is
used, the class of base queries is the set of all selection queries to the database and
the class of demand queries is a subset of conjunctive queries, called chain queries. 
Then for every query $\dq\in \dqset$ one can construct a weighted graph with two special vertices $s$ and $t$ in polynomial time, such that all the edges with finite weight correspond to base queries and have weight equal to their price and all other constructed edges have infinite weight and such that all finite weight $s$-$t$ cuts of the graph correspond to sets $\mathcal{C}\in supp_D^\bqset(\dq)$. 
We defer a more detailed exposition of the above class of queries for the experimental section
where we present experimental results for a simplified version of the above class of queries.

\section{Envy-Free Item Pricing}
In this section we re-interpret the analysis of query-pricing and cast it as an envy-free item pricing
problem with unlimited supply of items \cite{Guruswami}. This re-interpretation will allow us to 
use techniques and results from envy-free pricing as well as make the exposition much cleaner. In additional it will allow us to formulate the question of maximizing revenue in query-based pricing. 

In the setting of envy-free item-pricing, the seller possesses a set of items $J$ each one in infinite
supply. Each of the $n$ buyers has some combinatorial valuations $v_i(S)$ for getting a set of
items $S\subseteq J$. If a buyer gets a set of items $S\subseteq J$ and pays $p_j$ for each
item $j\in S$ then his utility is quasi-linear:
\begin{equation}
 u_i(S,{\bf p_S})=v_i(S) - \sum_{j\in S}p_j
\end{equation}

A vector of prices ${\bf p_J}=(p_j)_{j\in J}$ on the items and an allocation $S_i$ for bidder $i$ 
is \textit{envy-free} if no can increase his utility by selecting some other set at the given prices, i.e. he doesn't envy some other allocation. In other words the set $S_i$ has to satisfy:
\begin{equation}
 S_i = \arg\max_{S\subseteq I} u_i(S,{\bf p_S}) = \arg\max_{S\subseteq I} v_i(S) - \sum_{j\in S}p_j
\end{equation}
Given a vector of prices ${\bf p_J}$ and allocations $(S_i)_{i\in [n]}$ the revenue of the seller
is: $\sum_{i\in [n]} \sum_{j\in S_i}p_j$.

The envy-free pricing literature asks the question of computing the optimal pair of prices and allocations such that the total revenue is maximized, subject to the constraint that allocations 
should be envy-free. 

In the next section we show that if we pick an appropriate natural definition of bidder valuations
then the envy-free optimal pricing question becomes equivalent to our initial question of computing 
optimal input prices for base queries in the arbitrage-free pricing framework.

\subsection{Equivalence to Arbitrage-Free Pricing}

To even be able to formulate the question of optimal query pricing in the arbitrage 
free framework we need to introduce a buyer valuation model. Our valuation stems from the 
following natural assumption: we assume that each buyer has a value $v_i\in [0,H]$ (where $H$
is some upper bound on the valuation) for getting the 
results to his demand query $\dq_i$. Hence, a player gets a value $v_i$ if he gets the responses to query $\dq_i$ or to any other set of queries $\dq_1,\ldots,\dq_k$ that determine $\dq_i$.
Then our main question asks, how should the seller price his base queries such that 
he maximizes his total revenue, assuming that each buyer pays the price implied by the
Fundamental Pricing Function for his demanded query (as long as this price is below $v_i$).

For any such instance of the optimal query-pricing framework we present a corresponding instance
of the envy-free pricing framework that renders the above question equivalent to the optimal envy-free pricing question. We will interpret each of the queries $\bqset$ that the seller prices explicitly as items of unlimited supply in the envy-free pricing instance. 

Each bidder $i$ has a valuation $v_i(S)$ for getting a set of items $S$ defined as follows: he gets a value $v_i\in [0,H]$ if he gets the set of items corresponding to any set $\mathcal{C}\in supp_D^\bqset(\dq_i)$
or a superset of such a set and $0$ otherwise. More formally, the valuation of buyer $i$ for buying a set of items $S$ ss:
\begin{equation}
v_i(S) = \begin{cases}
v_i & \text{~if~} S\supseteq \mathcal{C} \in supp_D^\bqset(\dq_i)\\
0 & \text{~o.w~}
\end{cases}
\end{equation}
We will refer to such valuations as unit-bundle-demand valuations.

The seller simply sets prices on the items. Observe that for such valuations, given any vector of item prices, an allocation is envy-free only if each buyer is getting the set $\mathcal{C}_i\in supp_D^\bqset(\dq_i)$ that gives him his value $v_i$ at the lowest price:
\begin{equation}
\forall \mathcal{C}'\in supp_D^\bqset(\dq_i): \sum_{j\in \mathcal{C}_i} p_j \leq \sum_{j\in \mathcal{C}'}p_j
\end{equation}
Hence, given a set of item prices or equivalently base query prices, a buyer interested in a query $\dq_i$ will pay:
\begin{equation}
\min_{\mathcal{C}\in supp_D^\bqset(\dq_i)}\sum_{j\in \mathcal{C}} p_j,
\end{equation}
subject to the latter quantity being lower than his value. The latter is exactly the price
given by the Fundamental Pricing Formula \eqref{eqn:fund_price} for query $\dq_i$.

\emph{Thus maximizing revenue in the query-pricing model translates to finding the 
optimal envy-free pricing in an instance where bidders have unit-bundle-demand valuations.}

The other interesting aspect of the problem is that the valuations of the bidders is
given implicitly by the pair $(v_i,\dq_i)$. Hence, our algorithms should perform in polynomial
time with respect to this succinct representation of the input. The suncinctness of the representation
makes it potentially hard to compute the envy-free allocation. In essence, given the item prices
we don't know which set the player is going to choose, and hence we cannot even compute the 
revenue of some instantiation of the item prices. However, observe that computing the 
envy-free allocation of a bidder is equivalent to computing the minimizer in the Fundamental
Pricing Formula. Thus this corresponds exactly to the polynomial computability of the 
Fundamental Pricing Formula. Therefore, for instances where our polynomial oracle assumption holds
we can use the oracle $\orac{\dq}$ (e.g. min-cut for chain queries) to compute the envy-free allocation of a bidder, even under such succinct representation of the valuation. 

% The polynomial time algorithm of Koutris et al \cite{Koutrisb} for computing the pricing function
% translates to saying that given the item prices, the demand function of each bidder is polynomially 
% computable. Meaning, that given the item prices, we can polynomially compute which items the 
% bidder is going to buy, that maximize his utility. This is not always true for any combinatorial
% valuation and the polynomial computability heavily relies on the structure of input and demand queries
% used by \cite{Koutrisb}. Thus the same arguments show for the given type of valuations, computing
% the demand set of a bidder reduces to a max flow problem. 

In subsequent sections we will use this property of the valuation functions so as to derandomize
randomized pricing schemes proposed in the literature and as a separation oracle in an improved
pricing algorithm that we propose.

\paragraph*{Hardness of Optimal Pricing}
Even in the case when computing the envy-free allocation is polynomially computable, the problem of finding optimal envy-free prices has been shown to be NP-hard and even hard to approximate \cite{Guruswami}. For instance, the problem is NP-hard even when each buyer is interested only in a specific single set (i.e.  $supp_D^\bqset(\dq_i)$ is a singleton set) and has a value $v_i$ for acquiring it (single-minded bidders). In the appendix we give an alternative NP-hardness reduction that shows that the problem is hard even for instances that naturally arise from database query instances, when each relation has a single attribute and the queries of the bidders are either selections or a join the involves all relations.

\section{Approximately Optimal Pricing}
Given the hardness results in the literature mentioned in the previous section, in this section we address the problem of finding approximately optimal pricing schemes. We start by a simple random pricing scheme proposed by Balcan et al. \cite{Balcan2008} that assigns the same price on all base queries and that yields an $O(\log(n)+\log(m))$ approximation to the optimal revenue where $n$ is the number of buyers and $m$ is the number of items or equivalently, explicitly priced base queries (i.e. for instance all selection queries of the database). Then we show how to derandomize this pricing scheme to obtain more robust guarantees, by using
the polynomial oracle assumption.

\subsection{Random Price}
Balcan et al. \cite{Balcan2008} show that in the envy-free pricing model, for arbitrary valuation functions, as long as the value of the bidders are bounded by some constant $H$, then a simple randomized single-pricing scheme achieves a $O(\log(n)+\log(m))$ approximation to the optimal pricing scheme. We present this scheme in the context of query-based pricing.

\begin{algorithm}[h]
\SetKwInOut{Input}{Input}\SetKwInOut{Output}{Output}
\Input{$H = \max_{i} v_i$}
\BlankLine
\nl  Let $q_l = \frac{H}{2^{l-1}}$, for $l\in \{1,\ldots,\lfloor\log(2nm)\rfloor\}$\;
\nl Pick a random price $p$ uniformly at random from the set $\{q_1,\ldots,q_s\}$\;
\nl Price all base queries in $\bqset$ at price $p$.
\caption{Randomized Single Price Scheme.}
\end{algorithm}
\renewcommand{\algorithmcfname}{ALGORITHM}

Although the above price gives a nice worse case approximation guarantee, it only achieves it
in expectation. In the next section we show how to derandomize the above scheme to obtain
a deterministic single-price scheme that achieves the same guarantee deterministically. 

\subsection{Derandomizing Using Polynomial Oracle}
The analysis in this section is a generalization of the analysis in Guruswami et al. \cite{Guruswami}
that provide a similar deterministic single pricing scheme, only for the case of single minded
bidders and not for the more general unit-bundle-demand valuations that we need to cope with. 

Our analysis starts with the following observation: if the price of all base queries is the same
then the minimizer of the Fundamental Pricing Formula does not depend on the actual price. Rather 
it is simply going to be the support set $\mathcal{C}_i \in supp_D^\bqset(\dq_i)$ of minimum size. Let $t_i=|\mathcal{C}_i|$ be the size of that set. Given the demanded query $\dq_i$ of each buyer, the size of his minimum set and the minimum set itself is polynomially computable by $\orac{\dq_i}$. Simply place a price of $1$ on all base queries and invoke the oracle. 

Given the value $v_i$ and the size $t_i$ of his smallest support set, we assign
the same price to all base queries according to Algorithm \ref{alg:det-single}. We 
prove that this deterministic algorithm gives a $\log(n)+\log(m)$ approximation guarantee.
\begin{algorithm}[h]\label{alg:det-single}
\SetKwInOut{Input}{Input}\SetKwInOut{Output}{Output}
\Input{$(v_1,t_1),\ldots,(v_n,t_n)$}
\BlankLine
\nl Let $\pi_i = \frac{v_i}{t_i}$ be the per-base-query value of each buyer \;
\nl Reorder players such that $\pi_1\geq \pi_2\geq \ldots\geq \pi_n$\;
\nl For $i=\{1,\ldots,n\}$ let $\rev_i$ be the revenue obtained by pricing all base queries at
$\pi_i$. \;
\nl $\rev_i$ is polynomially computable since $t_i$ is and is simply $\pi_i\sum_{i'\leq i}  t_{i'}$\;
\nl Let $i^* = \arg\max_i \rev_i$. Price each selection query at $\pi_{i^*}$
\caption{Deterministic Single Price Scheme.}
\end{algorithm}
\renewcommand{\algorithmcfname}{ALGORITHM}

\begin{theorem} Algorithm \ref{alg:det-single} computes a pricing that is $O(\log(n)+\log(m))$-approximately
optimal.\end{theorem}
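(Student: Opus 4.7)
The plan is to reduce the deterministic scheme to Balcan et al.'s randomized single-price scheme, and then invoke their $O(\log n + \log m)$ guarantee. The key preliminary observation is that when every base query is set to a common price $p$, the Fundamental Pricing Formula for buyer $i$ collapses to $p\cdot t_i$, because among all $\mathcal{C}\in supp_D^\bqset(\dq_i)$ the cheapest one under a uniform per-query price is the one of smallest cardinality. Hence at uniform price $p$, buyer $i$ purchases iff $\pi_i = v_i/t_i \geq p$ and contributes $p\cdot t_i$ to revenue. After sorting $\pi_1 \geq \ldots \geq \pi_n$, the revenue of pricing every base query at $\pi_i$ is exactly $\pi_i \sum_{i' \leq i} t_{i'}$, which is what Algorithm~\ref{alg:det-single} enumerates over.

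The core step is a dominance argument. For each level $l \in \{1,\ldots,s\}$ with $s = \lfloor \log(2nm)\rfloor$ and $q_l = H/2^{l-1}$, let $i(l)$ be the largest index with $\pi_{i(l)} \geq q_l$. At the deterministic candidate price $\pi_{i(l)} \geq q_l$, every buyer in $\{1,\ldots,i(l)\}$ still purchases, so
\[
\rev_{i(l)} \;=\; \pi_{i(l)}\sum_{i'\leq i(l)} t_{i'} \;\geq\; q_l \sum_{i'\leq i(l)} t_{i'},
\]
and the right-hand side is exactly the revenue of the randomized scheme conditioned on drawing price $q_l$. Maximizing over $l$ yields
$\rev_{i^*} = \max_i \rev_i \geq \max_l \rev(q_l) \geq \frac{1}{s}\sum_l \rev(q_l) = \E[\text{random scheme revenue}]$.

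Finally, I would invoke (or briefly reprove) the bound that the randomized scheme achieves $\E[\text{revenue}] = \Omega(\text{OPT}/(\log n + \log m))$. The argument partitions buyers by whether $\pi_i \geq H/(2nm)$: for each high-$\pi$ buyer, there is a level $q_l \in (\pi_i/2, \pi_i]$ at which the scheme extracts $t_i q_l \geq v_i/2$, so averaging over the $s$ levels recovers $\Omega(v_i/s)$ per such buyer. The low-$\pi$ tail contributes at most $n \cdot m \cdot H/(2nm) = H/2$ to $\sum_i v_i$, and is absorbed because $\text{OPT}\geq H$ (simply price the top-valued buyer's minimum support set at total $H$). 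Chaining the three steps gives the claimed $O(\log n + \log m)$ approximation. I expect the main obstacle to be the bookkeeping in this last step, particularly bounding the low-$\pi$ tail and arguing $\text{OPT} = \Omega(H)$; the earlier dominance reduction is purely arithmetic once the uniform-price collapse of the Fundamental Pricing Formula is in hand.
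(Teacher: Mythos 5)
Your proposal is correct in substance but takes a genuinely different route from the paper. The paper's proof never mentions the randomized scheme at all: it directly bounds the total value of the bidders by the algorithm's revenue via a harmonic-sum telescoping argument (the Guruswami et al.\ style analysis). Concretely, since the algorithm's revenue $\rev$ is at least $\rev_i = \pi_i \sum_{i'\leq i} t_{i'}$ for every $i$, each $v_i$ satisfies $v_i \leq \rev\, t_i / \sum_{i'\leq i} t_{i'}$; summing over $i$ and expanding each ratio into $t_i$ unit fractions collapses the sum into $H_{\sum_i t_i} \leq H_{nm}$, giving $\sum_i v_i \leq \rev \cdot \log(nm)$, and the theorem follows from $\mathrm{OPT} \leq \sum_i v_i$. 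Your two-step argument --- (i) the deterministic maximum dominates the randomized scheme's revenue at every level $q_l$, hence its expectation, and (ii) the Balcan et al.\ bucketing analysis of that expectation --- is valid, but it is longer, loses constant factors, and requires the extra observation $\mathrm{OPT} \geq H$ (which does hold: a uniform price of $H/t_{i_{\max}}$ on all base queries extracts $H$ from the top-valued buyer, since the minimizer of the Fundamental Pricing Formula at a uniform price is the minimum-cardinality support set). The paper's argument, by contrast, bounds the revenue against the strictly stronger benchmark $\sum_i v_i$ with no slack beyond the harmonic number itself. One piece of bookkeeping in your step (ii) to fix: the smallest level is $q_s \approx H/(nm)$, so a buyer with $\pi_i \in [H/(2nm),\, q_s)$ has no level in $(\pi_i/2, \pi_i]$; you should set the high/low threshold at $q_s$ rather than $H/(2nm)$, which changes the tail bound from $H/2$ to roughly $2H$ but does not affect the asymptotic claim.
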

\begin{proof}
If all items are charged at $\pi_i$ then $\rev_i = \pi_i\sum_{i'\leq i}t_{i'}$. Since the revenue $\rev$ of the algorithm
is at least $\rev_i$ for all $i$ we get: $\rev \geq \frac{v_i}{t_i}\sum_{i'\leq i}t_{i'}$. Hence, 
\begin{align}
v_i \leq \rev\frac{t_i}{\sum_{i'\leq i}t_{i'}}
\end{align}
Summing over all players we get:
\begin{align*}
\sum_i v_i \leq~& \rev \sum_i \frac{t_i}{\sum_{i'\leq i}t_{i'}}
=~\rev \sum_i \sum_{k=1}^{t_i} \frac{1}{\sum_{i'\leq i}t_{i'}}\\
\leq~& \rev \sum_i \sum_{k=1}^{t_i} \frac{1}{k+\sum_{i'< i}t_{i'}}
=~  \rev\sum_{k=1}^{\sum_i t_i} \frac{1}{k}\\
\leq~& \rev\sum_{k=1}^{n\cdot m} \frac{1}{k} = \rev \cdot H_{n\cdot m} \leq
\rev\log(n\cdot m)
\end{align*}
Now the theorem follows by simply observing that the revenue of the optimal
pricing scheme is at most $\sum_i v_i$ since we can extract a revenue of at 
most $v_i$ from each bidder.
\end{proof}

In fact the above theorem shows that this pricing scheme achieves at least a
$\log(n\cdot m)$ bound with respect to the total value of the bidders, which is a 
stronger benchmark than the optimal revenue. In fact, the following easy example
shows that even when there is only a single base query to be priced no query-pricing scheme (even
ignoring computational constraints) can achieve a revenue better than $\log(n)$ of the total value 
of the buyers.

\begin{example}
Consider a set of $n$ buyers and a database with only a single priced base query. Suppose
that the value of buyer $i$ is $1/i$. Hence, the total value of the buyers is
$H_n$. On the other hand it is easy to see that any base query price will yield
a revenue of at most $1$. Observe that an optimal price will be of the form $1/k$
for some $k\in[1,n]$. However, if we post a price of $1/k$ then we know that only 
$k$ buyers have a valuation higher than $1/k$ and therefore only $k$ buyers will
ever buy. Hence, the total revenue will be $1$.
\end{example}

Last, it is interesting to note that the above pricing scheme is a very easy to implement and announce for large web applications. The seller simply needs to announce a single price to the potential
buyers and then each buyers price will depend on the information size of his query as is 
implied by the number of base queries needed to determine it. The latter is also an 
easy to describe explanation to the buyer for the price he had to pay for his query. 

%
%Since in the above example there was only one query to be priced, the inability to extract
%more than an $H_n$ factor of the total value is not based on the fact that we restricted to single
%price schemes. In the next example we show that for the restricted single-price schemes
%the $H_{n\cdot m}$ factor is tight. However, this still allows for non-single price schemes
%to yield more than an $H_{n\cdot m}$ fraction of the value. Multi-price schemes are examined
%in the next section.
%
%\begin{example}
%Consider a set of $n$ buyers and a database with $m$ priced queries. Each buyer is
%interested in the whole set of priced queries (i.e. his support set consists of the whole
%set of selection queries). Buyer $i$ has a value of $\sum_{j=1}^{m} \frac{1}{j\cdot n+i}$.
%Thus the total value of the buyers is $\sum_{i=1}^{n}\sum_{j=1}^{m} \frac{1}{j\cdot n+i}=H_{n\cdot m}$.
%Suppose that we post a single price $p$ for all the items. Observe that if $p>\frac{1}{m\cdot n+i}$
%then buyer $i$ will never buy.
%\end{example}

\section{Improved Multi-Price Schemes}
The pricing schemes in the previous section though approximately optimal in the worse
case, they take minimal advantage of the structure of the valuations of the bidders. 
It is obvious that an optimal pricing scheme would take advantage of the specific structure
of the bidder valuations and the demanded queries to assign different prices to different selection queries. 

In this section we present an algorithm that although asymptotically has the same worse
case guarantees, nevertheless performs strictly better than the single price schemes of the previous section point-wise, for each instance of the pricing problem. 

We first point out that when the base query prices are not the same then the minimizer of 
the Fundamental Pricing Formula for any query depends heavily on the actual prices and hence we cannot abstract by saying that it is always going to be the support set with the smallest size. 

Hence, the base query set that determines the price of a buyers query is affected by the input prices and the revenue is subsequently affected by this set. This creates a feedback problem that would lead to an exponential algorithm if we were to proceed in optimizing over the space of all possible input prices.

In addition, it is not clear which of the buyers the optimal pricing scheme serves. In fact the NP-hardness of the problem stems mainly from this fact. We portray this
in the next section by first considering the case of single-minded bidders. We propose an
improved algorithm for single-minded bidders and then we generalize it to the general class
of valuations that we cope with.

\subsection{Single-Minded Valuations}\label{sec:single-mind-multi}
In order to develop some intuition, in this section we consider the special case where the support set $supp_D^\bqset(\dq_i)$ of each buyer is a singleton. Thus each buyer is satisfied only by a specific
set of base queries $\mathcal{C}_i$. 

We start by observing that if we knew that the optimal pricing scheme was serving a set $N^*$ of bidders
then computing the optimal prices can be done in polynomial time. Specifically, conditional on the 
service set $N^*$ the optimal pricing scheme can be written as a packing LP:
\begin{align}
\rev(N^*)=~&\max_{{\bf p_{\bqset}}} \sum_{i\in N^*} \sum_{\bq\in \mathcal{C}_i}\bpr{\bq}& \nonumber\\
s.t.~& \sum_{\bq\in \mathcal{C}_i}\bpr{\bq} \leq v_i ~& \forall i\in N^* \label{lp:serve}\\
~& \bpr{\bq} \geq 0 ~& \forall \bq\in \bqset \nonumber
\end{align}

Thus a trivial exponential algorithm would be to enumerate over all $2^n$ possible service
sets, compute the revenue of the optimal pricing and then pick the service set that yields the
highest revenue. This is too costly and our approach is to identify a polynomial
number of candidate service sets to check which will guarantee that an approximately good pricing scheme
would be found in some of this polynomially sized search space.

The single-price schemes in the previous section and their analysis, gives us exactly that. Essentially
they show that in order to find a pricing scheme that is a logarithmic approximation to the optimal
pricing, one needs to consider only $n$ possible service sets: the sets of the form $\{1,\ldots,i\}$
for $i\in \{1,\ldots,n\}$, where players are ordered with decreasing per-item value $\pi_i = \frac{v_i}{|\mathcal{C}_i|}$.

This yields the following pricing scheme for single-minded bidders.
\begin{algorithm}[h]\label{alg:det-multi}
\SetKwInOut{Input}{Input}\SetKwInOut{Output}{Output}
\Input{$(v_1,\mathcal{C}_1),\ldots,(v_n,\mathcal{C}_n)$}
\BlankLine
\nl Let $\pi_i = \frac{v_i}{|\mathcal{C}_i|}$ be the per-base-query value of each bidder. \;
\nl Reorder players such that $\pi_1\geq \pi_2\geq \ldots\geq \pi_n$\;
\nl For $i=\{1,\ldots,n\}$ let $N_i=\{1,\ldots,i\}$\;
\nl Compute $\rev_i=\rev(N_i)$ obtained by LP \eqref{lp:serve}\;
\nl Let $i^* = \arg\max_i \rev_i$. Output the pricing obtained by solving LP $\rev(N_{i^*})$.
\caption{Multi-Price Scheme for Single-Minded Bidders.}
\end{algorithm}
\renewcommand{\algorithmcfname}{ALGORITHM}

\begin{theorem} Algorithm \ref{alg:det-multi} is $O(\log(n)+\log(m))$-approximately
optimal and is at least as good as Algorithm \ref{alg:det-single} for any instance of 
the pricing problem.\end{theorem}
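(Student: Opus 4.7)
The plan is to prove both claims at once by showing that for each index $i$, the value $\rev_i = \rev(N_i)$ returned by LP \eqref{lp:serve} dominates the revenue that Algorithm \ref{alg:det-single} would obtain when it sets the uniform price $\pi_i$. Taking the maximum over $i$ then gives the pointwise dominance, and the logarithmic approximation follows immediately from the previous theorem.

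First I would verify feasibility: with players ordered so that $\pi_1\ge\pi_2\ge\cdots\ge\pi_n$, consider the candidate solution $\bpr{\bq}=\pi_i$ for every $\bq\in\bqset$. For each buyer $j\in N_i=\{1,\ldots,i\}$ we have $\pi_j\ge\pi_i$, so $\sum_{\bq\in\mathcal{C}_j}\bpr{\bq}=\pi_i|\mathcal{C}_j|\le \pi_j|\mathcal{C}_j|=v_j$, which shows that the uniform price vector satisfies every constraint of LP \eqref{lp:serve} conditioned on the service set $N_i$. The objective value attained by this feasible vector is $\pi_i\sum_{j\le i}|\mathcal{C}_j|=\pi_i\sum_{j\le i}t_j$, which is exactly $\rev_i$ as computed by Algorithm \ref{alg:det-single} for the threshold $\pi_i$. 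Hence $\rev(N_i)\ge \pi_i\sum_{j\le i}t_j$ for every $i$.

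Taking the maximum over $i$ on both sides yields
\begin{equation*}
\max_i \rev(N_i)\;\ge\;\max_i \pi_i\sum_{j\le i}t_j,
\end{equation*}
i.e. the revenue returned by Algorithm \ref{alg:det-multi} is at least the revenue returned by Algorithm \ref{alg:det-single} on the same instance, which establishes the pointwise dominance claim. The approximation guarantee then follows by transitivity: since Algorithm \ref{alg:det-single} already attains an $O(\log(n)+\log(m))$ approximation to the optimal revenue (and even to $\sum_i v_i$, which upper-bounds the optimum), any algorithm whose output is pointwise at least as good inherits the same approximation ratio.

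There is no real obstacle here; the only thing to be careful about is the direction of the inequality in the feasibility check, which relies crucially on the ordering $\pi_1\ge\cdots\ge\pi_n$ and on the single-mindedness assumption (so that $t_i=|\mathcal{C}_i|$ is well-defined and independent of the base-query prices). The fact that the LP maximizes over all nonnegative price vectors, not only uniform ones, is what gives the strict improvement in generic instances: whenever a non-uniform pricing can extract more revenue from the service set $N_i$, Algorithm \ref{alg:det-multi} captures it while Algorithm \ref{alg:det-single} cannot.
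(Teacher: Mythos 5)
Your proof is correct and follows exactly the argument the paper intends (the paper states this theorem without a written proof, relying on the preceding discussion): the uniform price $\pi_i$ is feasible for LP \eqref{lp:serve} on service set $N_i$ and attains objective $\pi_i\sum_{j\le i}t_j$, so $\max_i\rev(N_i)$ dominates the single-price revenue, and the logarithmic guarantee is inherited from the previous theorem. No gaps.
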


In addition it is easy to see with the following example that the multi-pricing scheme 
presented above can actually achieve revenue $H_m$ times larger than the single-price scheme
of the previous section, where $m$ is the number of base queries.

\begin{example}
Consider a setting with $m$ buyers and $m$ selection queries. Each buyer is single-minded
and buyer $i$ wants only query $i$, while having value $v_i=1/i$ for it. 
Any optimal single price scheme will post some price $p=1/k$ on all the items
for some $k$. Any such single price will obtain a revenue of $1$ since only $k$
buyers will actually buy. On the other hand the multi-price algorithm will actually
price item $i$ with $p_i=1/i$ and achieve the whole value of $H_m$ as revenue.
\end{example}

\subsection{Unit-Bundle-Demand Valuations}
In this section we extend Algorithm \ref{alg:det-multi} to the general class of
unit-bundle-demand valuations that we cope with. For more general valuations, a bidder could be satisfied by more than one set of base queries. Hence, we also need to take care of arbitrage opportunities that the bidder might have and augment the LP of the previous section with no-arbitrage
or equivalently envy-free constraints. 

However, we also don't know the base set that is the minimizer of the fundamental pricing formula for each player, since that set depends on the actual prices. Handling the multiplicity of the interest sets would lead to a non linear constraint if we were to directly adapt the LP in the previous section. 

Hence, we need in some sense we need to fix the minimizer of the fundamental pricing formula for 
each player before optimizing over the revenue, i.e. we have to predetermine from which of his multiple sets the player is going to be satisfied by. Then we will formulate an LP that captures the fact that the player should not be envying being satisfied by any other of the sets in his support at the current price levels.

We will again use the intuition behind the single pricing schemes to determine the latter pre-determined set. Specifically, we will fix that each bidder will pick the set of queries from his support set of
minimum size. Observe that this was the set that he picked when we were setting the same price for all the items. Hence, it suffices to restrict ourselves to such a constraint, so as to get a logarithmic approximation. 

For every player $i$ let $\mathcal{C}_i$ be his support set of minimum size. Given a set of players $N^*$ that have to be served and under the constraint that the minimizer of the fundamental pricing formula has to be $\mathcal{C}_i$ for each player $i$, the optimal revenue can be solved by the following LP:
\begin{align}
\rev(N^*)&=~\max_{{\bf \bpr{\bqset}}} \sum_{i\in N^*}\sum_{\bq\in \mathcal{C}_i} \bpr{\bq} & \nonumber\\
s.t.&~ \sum_{\bq\in \mathcal{C}_i} \bpr{\bq} \leq v_i ~& \forall i\in N^* \label{lp:serve-2}\\
&~ \sum_{\bq\in \mathcal{C}_i} \bpr{\bq}  \leq \sum_{\bq\in \mathcal{C}_i'}\bpr{\bq} ~& \forall i\in N^*, \mathcal{C}_i'\in supp_D^\bqset(\dq_i)\nonumber\\
&~ \bpr{\bq} \geq 0 ~& \forall \bq\in \bqset\nonumber
\end{align}

The third set of constraints captures exactly the arbitrage-freeness or equivalently
the envy-freeness for each player $i$. At first glance, the latter LP might be feasible, and 
specifically there is no guarantee that it will have a feasible solution if we fix some arbitrary
predefined sets, to be the minimizers of the pricing formulas. However, observe that for the specific choice of predetermined sets that we did, the LP is always feasible for any $N_i=\{1,\ldots,i\}$ (assuming players are in decreasing $\pi_i$ order), since setting the price of every base query to $\pi_i=\frac{v_i}{|\mathcal{C}_i|}$ is a feasible solution. 

The latter also implies that the value $\rev(N_i)$ of the above LP is at least the revenue of the single pricing scheme in the previous section for each $N_i$. Therefore, if we consider all the serving sets of the form $N_i$ and then take the serving set that gave the maximum revenue, then we will get at least as much revenue as the single pricing schemes of the previous section.

The main problem with LP \eqref{lp:serve-2} is that the set $supp_D^\bqset(\dq_i)$ could potentially contain exponentially many elements. Hence, the above LP can have exponentially many constraints.

However, the polynomial oracle assumption provides a separation oracle for the above LP and hence it can be solved in polynomial time using the ellipsoid method. Specifically, given a set of prices $\bprs$, we can check whether the exponentially many constraints $\sum_{\bq\in \mathcal{C}_i} \bpr{\bq} \leq \sum_{\bq\in \mathcal{C}_i'}\bpr{\bq}$ are violated, by calling 
oracle $\orac{\dq_i}$ with input prices $\bprs$ (e.g. this would correspond to a call to the min-cut computation described in \cite{Koutrisb} for the case of chain queries). If the result of $\orac{\dq_i}$ is below $\sum_{\bq\in \mathcal{C}_i} \bpr{\bq}$, then some of the constraints
is violated. More specifically it is the constraint that corresponds to the set $\mathcal{C}_i'$ returned by the oracle as the actual minimizer for the given prices. Hence, running $\orac{\dq_i}$ for
each $i\in [n]$ serves as a polynomial separation oracle for LP \eqref{lp:serve-2}.

\subsection{A Simpler Faster Multi-Price Scheme}\label{sec:multi-comb}
The pricing scheme described in the previous section involves solving $n$ linear
programs. The computational overhead of this addition could be substantial. Hence,
it would be interesting to consider more combinatorial pricing algorithms that still 
are improvements to the single pricing scheme.

In this section we propose such a pricing algorithm, which is an alternative to solving 
LP \eqref{lp:serve-2} for each of the $n$ serving sets considered. The algorithm
that we propose is described in Figure \ref{alg:alg-2}.

\begin{algorithm}[h]\label{alg:alg-2}
\SetKwInOut{Input}{Input}\SetKwInOut{Output}{Output}
\Input{$(v_1,Q_1),\ldots,(v_n,Q_n)$}
\BlankLine
\nl Compute minimum size support set $\mathcal{C}_i$ for each buyer, by running the corresponding
oracle with input prices initialized to $1$.\;
\nl Let $\pi_i = \frac{v_i}{|\mathcal{C}_i|}$ be the per-base-query value of each buyer. \;
\nl Reorder players such that $\pi_1\geq \pi_2\geq \ldots\geq \pi_n$\;
\nl For $i=\{1,\ldots,n\}$ let $N_i=\{1,\ldots,i\}$ and compute $\rev_i=\rev(N_i,(v_1,\mathcal{C}_1),\ldots,(v_n,\mathcal{C}_n))$ using Procedure \ref{proc:compr}\;
\nl  Let $i^* = \arg\max_i \rev_i$. Output the pricing obtained for service set $N_{i^*}$.
\caption{Simpler Multi-Price Scheme for General Chain Query Demands.}
\end{algorithm}

\renewcommand{\algorithmcfname}{PROCEDURE}
\begin{algorithm}[h]\label{proc:compr}
\SetKwInOut{Input}{Input}\SetKwInOut{Output}{Output}
\Input{$(v_1,C_1),\ldots,(v_n,C_n)$ and service set $N_i$.}
\BlankLine
\nl Initialize: $S^0 = \bqset$, $N^0=N_i$ and $(v_i^0,\mathcal{C}_i^0) = (v_i,\mathcal{C}_i)$, $\forall \bq\in \bqset: \bpr{\bq}=\infty$\;
\While{$S^t = \cup_{i\in N^t}\mathcal{C}_i\neq \emptyset$}{
\nl Let $p^t = \min_{i\in N^t}\frac{v_i^t}{|\mathcal{C}_i^t|}$ and $i^t= \arg\min_{i\in N^t}\frac{v_i^t}{|\mathcal{C}_i^t|}$.\;
\nl Set $\bpr{\bq} = p^t$ for all base queries $\bq\in \mathcal{C}_{i^t}$.\;
\nl Set $\mathcal{C}_i^{t+1} = \mathcal{C}_i^t - \mathcal{C}_{i^t}^t$ and $v_i^{t+1} = v_i^{t} - |\mathcal{C}_i^t\cap \mathcal{C}_{i^t}^t|\cdot p$ for all $i\in N^t$.\;
\nl $N^{t+1} = N^t-\{i^t\}$
}
\Output{Return revenue produced by prices $\bprs$.}
\caption{Procedure for finding prices conditional on a service set $N_i$.}
\end{algorithm}
\renewcommand{\algorithmcfname}{ALGORITHM}

Algorithm \ref{alg:alg-2} replaces the computationally costly LP with a simpler algorithm that is not as optimal as solving the LP, but still always performs at least as good as the single pricing scheme.
In fact our experimental analysis shows that it performs almost as good as the LP algorithm of the previous section.

The alternative process described above can be easily summarized as follows: at each iteration pick the
player that has the smallest per-base-query valuation $\pi_i$ for the items that haven't already been priced. Then assign this per-base-query valuation as a price on all the currently unpriced base queries in the minimum size support set $\mathcal{C}_i$ of this player. Then update the values of the players for the unpriced base queries and update their minimum size support sets to be the unpriced base queries of their initial set. 

In the next theorem we formally show that the above process always produces a revenue at least 
as high as the single price revenue for any instance of the problem. Observe that under the prices
set by the latter algorithm it is not necessarily true that the minimizer of the fundamental pricing
formula is $\mathcal{C}_i$. However, despite this fact we still show that the revenue is always at
least as good as the single price revenue. We show that by showing that the revenue is always greater
conditional on any service set $N_i=\{1,\ldots,i\}$.

\begin{theorem} For any $N_i =\{1,\ldots,i\}$, Procedure \ref{proc:compr} produces revenue at least as much as posting
a single price of $\pi_i=\frac{v_i}{|\mathcal{C}_i|} = \min_{k\in N_i}\frac{v_k}{|\mathcal{C}_k|}$.
\end{theorem}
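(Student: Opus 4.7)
The plan is to establish two invariants of Procedure~\ref{proc:compr} and then combine them to show that, for any $k \in N_i$, the price $p(\dq_k)$ charged to buyer $k$ lies in the interval $[\pi_i|\mathcal{C}_k|,\ v_k]$; summing over $k\in N_i$ then matches the uniform-price revenue, since posting $\pi_i$ on all base queries collects exactly $\pi_i\sum_{k\in N_i}|\mathcal{C}_k|$ from $N_i$ (and nothing from outside $N_i$, because $v_k<\pi_i|\mathcal{C}_k|$ for $k\notin N_i$). Note first that $p^0=\pi_i$: at $t=0$ we have $N^0=N_i$ with the original $(v_k,\mathcal{C}_k)$, and since players are sorted with $\pi_1\ge\cdots\ge\pi_n$, the minimum of $v_k/|\mathcal{C}_k|$ over $N_i$ is attained at $k=i$.

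\emph{Invariant 1 (monotonicity of $p^t$).} I would prove inductively that $p^{t+1}\ge p^t$. The algebraic heart of the argument: for any $k\in N^{t+1}$ with $\mathcal{C}_k^{t+1}\neq\emptyset$, writing $a=|\mathcal{C}_k^t\cap\mathcal{C}_{i^t}^t|$ and $b=|\mathcal{C}_k^t|$, the updated per-query ratio equals $(v_k^t-ap^t)/(b-a)$. Since $k$ was not the minimizer at step $t$ we have $v_k^t/b\ge p^t$, i.e.\ $v_k^t\ge p^t b$, whence $(v_k^t-ap^t)/(b-a)\ge p^t$. Consequently $p^{t+1}=\min_{k}v_k^{t+1}/|\mathcal{C}_k^{t+1}|\ge p^t$, and in particular every finite price $\bpr{\bq}$ set by the procedure is at least $\pi_i$. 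This preservation of the per-query lower bound is the main technical obstacle; once granted, Invariant 2 is essentially bookkeeping.

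\emph{Invariant 2 (budget feasibility).} Next I would show $\sum_{\bq\in\mathcal{C}_k}\bpr{\bq}\le v_k$ for every $k\in N_i$. The update preserves the identity $v_k^t=v_k-\sum_{\bq\in\mathcal{C}_k\setminus\mathcal{C}_k^t}\bpr{\bq}$ (the total already charged to $k$ for queries in $\mathcal{C}_k$ that have been priced). By Invariant 1, $v_k^t\ge p^t|\mathcal{C}_k^t|\ge 0$ throughout, and termination for $k$ occurs either because $\mathcal{C}_k^T=\emptyset$ (leaving $v_k^T\ge 0$) or because $k$ is picked as $i^T$, in which case $v_k^{T+1}=v_k^T-|\mathcal{C}_k^T|p^T=0$. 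Either way, $\sum_{\bq\in\mathcal{C}_k}\bpr{\bq}=v_k-v_k^{\text{final}}\le v_k$.

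\emph{Combining.} Only queries in $\bigcup_{k\in N_i}\mathcal{C}_k$ ever receive a finite price; all other base queries remain at $\infty$. Hence for each $k\in N_i$ and every $\mathcal{C}'\in supp_D^\bqset(\dq_k)$ with finite cost, we have $\mathcal{C}'\subseteq\bigcup_{k\in N_i}\mathcal{C}_k$ and $|\mathcal{C}'|\ge|\mathcal{C}_k|$ by the minimality of $\mathcal{C}_k$. Invariant 1 gives $\sum_{\bq\in\mathcal{C}'}\bpr{\bq}\ge\pi_i|\mathcal{C}'|\ge\pi_i|\mathcal{C}_k|$, so $p(\dq_k)\ge\pi_i|\mathcal{C}_k|$. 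Invariant 2 gives $p(\dq_k)\le\sum_{\bq\in\mathcal{C}_k}\bpr{\bq}\le v_k$, so buyer $k$ actually purchases and contributes at least $\pi_i|\mathcal{C}_k|$. Summing over $k\in N_i$ yields procedure revenue $\ge\pi_i\sum_{k\in N_i}|\mathcal{C}_k|$, which is exactly the revenue of posting the single price $\pi_i$ on all base queries.
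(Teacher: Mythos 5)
Your proof is correct and follows essentially the same route as the paper: the core step in both is the algebraic induction showing the per-base-query ratios (equivalently, the prices $p^t$) never decrease, so every finite price is at least $\pi_i$, and then each buyer in $N_i$ pays at least $\pi_i|\mathcal{C}_k|$ because any finitely-priced support set has size at least $|\mathcal{C}_k|$. Your Invariant~2 explicitly verifies that each $k\in N_i$ still purchases (i.e.\ $p(\dq_k)\le v_k$), a point the paper's proof only asserts when it claims the same set $N_i$ is allocated, so your write-up is in fact slightly more complete.
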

\begin{proof}
We show that the prices $\bprs$ output by Procedure \ref{proc:compr} are greater than or equal to $\pi_i$. Given the above observe that the total revenue must be greater since: the same set of bidders
$N_i$ are allocated by the single pricing scheme and by procedure \ref{proc:compr} and the minimizer 
of the pricing formula for each buyer contains at least as many base queries as that of the minimum cardinality set $\mathcal{C}_i$. Observe that in the pricing output by procedure \ref{proc:compr} it is not necessarily true that under such pricing the minimizer is going to be the minimum cardinality set, since prices are not the same for all items. Hence, a player's minimizing set might be some other support set that has a larger size, but is priced cheaper. However, our reasoning above still works, to produce revenue at least as much.

To show our initial and main claim, we first show that the per-base-query valuations of the players are monotonically increasing as a function of the step $t$,(assume that if a player is selected at some point in the process then after that his per-item value is infinity). We show this by induction. More, formally, let $\pi_k^t = \frac{v_k^t}{|\mathcal{C}_k^t|}$ be the per-base-query valuation of player $k\in N_i$ at step $t$. We will show that $\pi_k^t\geq \pi_k^{t+1}$.

By the way the algorithm works, we know that for any buyer: $\pi_i^t \geq \pi_{i^t}^t = p^t$. Hence:
\begin{align*}
\pi_k^{t+1} =~& \frac{v_k^t-p^t|\mathcal{C}_k^t\cap \mathcal{C}_{i^t}^t|}{|\mathcal{C}_k^t|-|\mathcal{C}_k^t\cap \mathcal{C}_{i^t}^t|}\geq \frac{v_k^t-\pi_k^t|\mathcal{C}_k^t\cap \mathcal{C}_{i^t}^t|}{|\mathcal{C}_k^t|-|\mathcal{C}_k^t\cap \mathcal{C}_{i^t}^t|}\\
=~& \frac{v_k^t-\frac{v_k^t}{|\mathcal{C}_k^t|}|\mathcal{C}_k^t\cap \mathcal{C}_{i^t}^t|}{|\mathcal{C}_k^t|-|\mathcal{C}_k^t\cap \mathcal{C}_{i^t}^t|} = 
\frac{v_k^t}{|\mathcal{C}_k^t|}\frac{|\mathcal{C}_k^t|-|\mathcal{C}_k^t\cap \mathcal{C}_{i^t}^t|}{|\mathcal{C}_k^t|-|\mathcal{C}_k^t\cap \mathcal{C}_{i^t}^t|} = \pi_k^t
\end{align*}

Now observe that by the assumption on the ordering of the players we have: $\pi_k\geq \pi_i$ 
for all $k\in N_i =\{1,\ldots,i\}$. Thus by the monotonicity of per-base-query valuations
we get that $\pi_k^t \geq \pi_i$ for any player $k$ and any step $t$.
Consider a base query $\bq\in \bqset$. Suppose that query $\bq$ was priced at iteration $t$ (if a query was never priced it means that it is not in the set $\mathcal{C}_i$ of any buyer and hence we can ignore it since it remains priced at a huge value $H$). Then we get: $p_\bq = p^t = \pi_{i^t}^t\geq \pi_i$.
\end{proof}

\section{Experimental Analysis}
In the previous section we presented new pricing algorithms that pointwise improve the existing approximation algorithms in the envy-free pricing literature, for the specific class of unit-bundle-demand valuations that arise from the query-pricing framework. However, our algorithms don't improve the asymptotic efficiency which theoretically still remains $O(\log(m)+\log(n))$. 

In order to justify the usability of these new algorithms we perform an experimental evaluation
of their performance when compared to the existing single-pricing schemes. In addition, we also compare them with respect to the exponential optimal algorithm for small instances, showing that it achieves an almost optimal performance. 

We first present our experimental results for the general class of unit-bundle-demand valuations
that arise when the base queries are all the selection queries in the database and the 
demand queries are the set of chain join queries described by Koutris et al. \cite{Koutrisb}.
For this general class of complex valuations we compare the efficiency of our fast multi-price scheme
with respect to the single price one. 

Subsequently, we focus on the simpler class of single-minded valuations, so as to portray how our algorithms perform with respect to the optimal pricing.

\subsection{Chain Query Valuations}
For this experimental evaluation we considered the instantiation of the query-pricing framework 
where the notion of instance based determinacy is used. The reader is redirect to \cite{Koutrisb} for 
more detailed exposition. In the instance, based determinacy framework each attribute is assumed
to take values from some predefined discrete set. Specifically, for each attribute $R.X$ of relation 
$R$, we denote with $Col_{R.X}=\{a_1,\ldots,a_t\}$ a finite set of possible values that attribute $R.X$ has to lie in. Thus the database has only a finite set of possible instances. Instance-based determinacy states that a set of queries $\dq_1,\ldots,\dq_k$ determine a query $\dq$ under instance $D$, if for any other possible instance $D'$, if the results of queries $\dq_1,\ldots,\dq_k$ remain exactly the same as in $D$ then it must be that the result of $\dq$ also remains the same.

Denote with $\sigma_{R.X=a_i}$ the selection query that corresponds to selecting all entries of relation $R$ that have value $a_i$ at attribute $R.X$. In addition denote with $\Sigma_{R.X} = \{\sigma_{R.X=a_i}~|~a_i\in Col_{R.X}\} $ to
be the set of a selection queries for a particular attribute of relation $R$, and denote 
with $\Sigma$ the set of all selections for all relations and all attributes. We consider the setting
where the set of base queries $\bqset$ is equal to the set of all selection queries $\Sigma$.

For simplicity of exposition we consider only relational schemas where each relation is a binary relation and each buyer is interested in a Chain Query as is described in \cite{Koutrisb}:
\begin{defn}A chain query is of the form of a join of $k$ relations: $\dq=R_1,\ldots,R_k$, where 
each $R_i$ is a binary relation and each relation shares exactly one variable with each successor.
\end{defn}
The latter setting is a special case of the Generalized Chain Queries of \cite{Koutrisb}, hence for 
the latter setting we know by \cite{Koutrisb} that the Fundamental Pricing Formula is polynomially
computable and is reduced to a min-cut computation. 

We implemented Algorithm \ref{alg:alg-2} instantiated for the latter setting and compared each revenue 
performance with respect to the single-price scheme, for varying parameters of the size of the schema,
the number of possible values of each attribute and the number of buyers. We observed that the revenue ratio increases significantly as a function of the number of binary relations on the attributes in the schema. Our instances where produced by sampling random chain queries on given relational schemas of varying size. The number of possible values of each attribute was held fixed. In Figure \ref{fig:alg-2-comp} we present how the revenue ratio varies with the number of relations. On the other hand we observed that the ratio remains constant with the number of possible values of the attributes. The main reason is that the base queries corresponding to selecting a specific value of some attribute of a binary relation in general are either all contained in the fundamental formula minimizer or none. Hence, they mostly act as a single item and therefore don't increase the complexity of the instances. We also observed that due to the random generation of our instances, as we increased the number of bidders by 
much, the ratio actually decreased. The main reason is that as the number of bidders becomes large all base queries affect almost homogeneously the revenue and hence a single price scheme is close to optimal. 

\begin{figure}
\centering
\includegraphics[scale=.6]{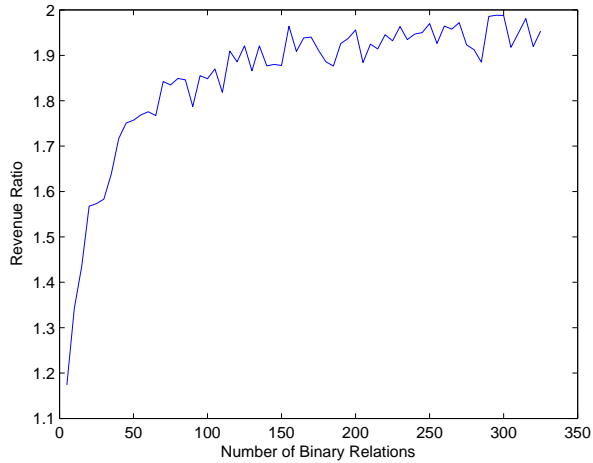}
\caption{This plot portrays the average ratio of the revenue obtained by the LP based multi-price
scheme of Section \ref{sec:single-mind-multi} over the single-price revenue.
The number of buyers is constant $20$ while the number of binary relations in the database varies from $5$ to $325$. The results imply that the ratio between the two algorithms might be increasing logarithmically in the number of binary relations for such random instances.}
\label{fig:alg-2-comp}
\end{figure}

\subsection{Single Minded Valuations}
To make the comparison with the exponential optimal algorithm feasible, in this section we focus on single-minded bidders, which corresponds to the case when the query of each bidder has a singleton support set, i.e. there is a unique set of base queries that determine it. 

We perform experiments where the bidders have a value drawn uniformly at random from some set. In addition their interest set is drawn uniformly at random as follows: we pick a random number $t\in [1,|\bqset|]$ and then we pick $t$ base queries at random with replacement. 

In Figure \ref{fig:varying_m} we compare the revenue performance of the two algorithms multi-price algorithms with resepct to the single price one as a function of the number of base queries. In Figure \ref{fig:opt} we compare our LP based multi-price algorithm \ref{alg:det-multi} to the optimal one.
We observe that the multi-price scheme consistently achieves at least 99\% of the optimal revenue.

\begin{figure}
\includegraphics[scale=.6]{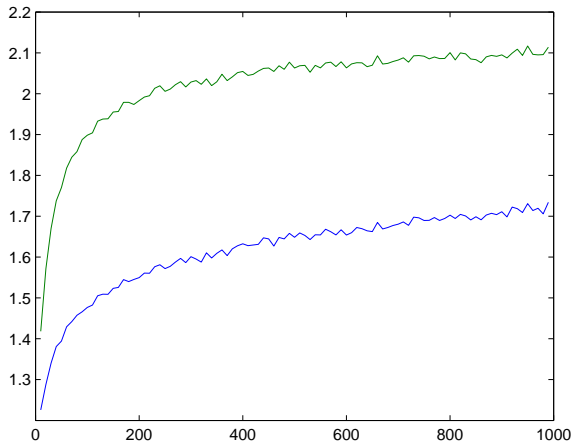}
\caption{This plot portrays the average ratio of the revenue obtained by the multi-price
schemes over the revenue of the single derandomized scheme. The input instances have $20$
buyers and the plot shows the average ratio as a function of the number of items/base queries 
$m$, varying from $10$ to $1000$. For each $m$, $1000$ random instances where ran and the average
ratio is depicted. The green line portrays the ratio of the LP based multi-price Algorithm \label{alg:det-multi} and the blue line the ratio of the combinatorial
multi-price scheme of Section \ref{sec:multi-comb}.}
 \label{fig:varying_m}
\end{figure}

% \begin{figure}
% \includegraphics[scale=.6]{varying_n.eps}
% \caption{This plot portrays the average ratio of the revenue obtained by the multi-price
% scheme over the revenue of the single derandomized scheme, when the number of items/base queries remains
% fixed at $50$ and the number of buyers ranges from $10$ to $600$.}
% \end{figure}

\begin{figure}
\includegraphics[scale=.6]{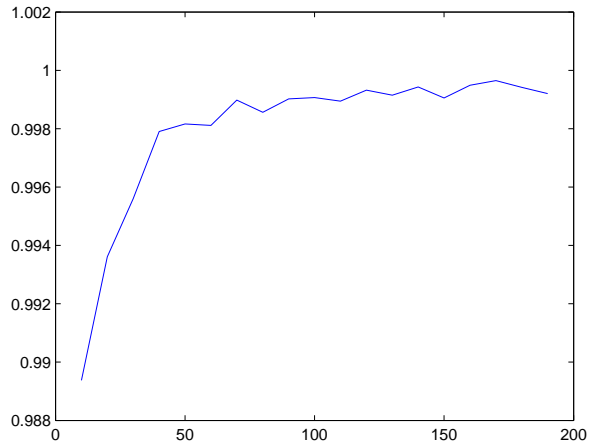}
\caption{This plot portrays the average ratio of the revenue obtained by the LP based multi-price
scheme of Section \ref{sec:single-mind-multi} over the optimal revenue obtained by an exponential algorithm.
The number of buyers is constant $7$ while the number of items/base queries varies from $10$ to
$200$.}
\label{fig:opt}
\end{figure}

\section{Discussion and Conclusions}
We asked the question of how to optimally set the price to the base queries in the query-based pricing framework introduced by Koutris et al. \cite{Koutrisb}. We showed that this problem is equivalent to an item pricing problem in the envy-free pricing framework introduced by Guruswami et al. \cite{Guruswami}. We gave a strong analogy between the arbitrage freeness axiom in the query pricing framework with the envy-freeness concept in the envy-free pricing framework.

We showed that the restricted setting studied by Koutris et al.\cite{Koutrisb} that allows for the polynomial computation of the pricing function, yields a valuation class in the envy-free pricing framework that has a polynomially computable demand function: given item prices one can compute in polynomial time the optimal set for each player and the total price that the player will pay for his optimal set. 

We used this structural property to improve on existing pricing schemes from the envy-free literature in two directions. First we used it to derandomize a random pricing guarantee so as to achieve more robust revenue approximations. Secondly, we considered a multi-price extension of the single-pricing schemes of the literature that always yield higher revenue pointwise for each instance of the pricing problem. 

Last we gave some experimental results showing that our new algorithm can give significant improvement on the revenue achievable by the single pricing schemes and that for small instances, where the optimal is computable in reasonable time, the outcome of our multipricing scheme is very close to optimal.

\section{Future Directions}
An interesting future direction is to apply our results to a Bayesian setting of incomplete information where the pair $(v_i,Q_i)$ of each bidder is independently and identically distributed according to some distribution $D$ on $\R_+\times 2^{Q}$ (Bayesian Setting). In the latter setting the auctioneer wants to set prices that maximize his expected revenue. One could use a discretization of the distribution where all probabilities are multiples of $\epsilon$. Then convert the bayesian problem into a complete information problem where each probability mass of $\epsilon$ is represented by a different buyer in the complete information setting, yielding an instance with $1/\epsilon$ buyers. It is reasonable that an approximate solution in this reduced instance will be an approximate solution to the initial pricing problem in the Bayesian setting, however a more rigorous analysis of such an extension is needed.

Another interesting future direction is try to reduce the computational overhead introduced by having
to solve linear programs in our multi-pricing schemes, while at the same time yielding very comparable revenue. We gave one such algorithm, but we believe there is room for improvement in that direction. It is possible that for this specially structured LPs that we formulated one could also find an equivalent combinatorial algorithm for finding an optimal or approximately optimal solution. An initial look towards that direction shows that LP \eqref{lp:serve} formulated for single minded bidders is close to an optimization over a polymatroid in the special case when the value of each bidder is uniquely defined by his interest set and is a submodular function of the interest set: i.e. the value of a player interested in a support set $\mathcal{C}_i$ is $v_i=v(\mathcal{C}_i)$, where $v(\cdot)$ is a submodular function. The only difference is that a subset of the constraints of the polymatroid has been dropped. Despite this fact, under natural assumptions a greedy approach of sequentially picking the item that is in the interest set of the most players and raising it's price as much as possible, would be a good approximation to the LP's value.

\paragraph*{Acknowledgments} The authors would like to thank \'{E}va Tardos and Renato Paes Leme for  fruitful conversations on the problem.

\bibliographystyle{abbrv}
\bibliography{pricing_queries}

\end{document}